\def\<#1>{\langle#1\rangle}
\let\set\mathbbm
\def\lc{\operatorname{lc}}
\def\tends#1{\raisebox{2pt}{\!$\begin{CD}@>{#1}>>\end{CD}$\!}}
\newtheorem{theorem}{Theorem}
\newtheorem{example}{Example}
\newtheorem{rem}{Remark}
\newtheorem{conjecture}{Conjecture}
\newtheorem{@algorithm}{Algorithm}
\newenvironment{algorithm}{\begin{@algorithm}\normalfont}{\end{@algorithm}}
\def\true{\mathrm{True}}
\def\false{\mathrm{False}}
\def\field{K}
\begin{document}

 \title{When can we detect that a P-finite sequence is positive?}


\numberofauthors{2}
\author{
 \alignauthor Manuel Kauers\titlenote{Supported by the Austrian Science Fund
   (FWF) grants P20347-N18 and Y464-N18.}\\[\smallskipamount]
      \affaddr{RISC}\\
      \affaddr{Johannes Kepler University}\\
      \affaddr{4040 Linz (Austria)}\\[\smallskipamount]
      \email{mkauers@risc.jku.at}
 \alignauthor Veronika Pillwein\titlenote{Supported by the Austrian Science Fund
   (FWF) grant W1214/DK6.}\\[\smallskipamount]
      \affaddr{RISC}\\
      \affaddr{Johannes Kepler University}\\
      \affaddr{4040 Linz (Austria)}\\[\smallskipamount]
      \email{vpillwei@risc.jku.at}
}

\maketitle

\begin{abstract}
We consider two algorithms which can be used for proving positivity of
sequences that are defined by a linear recurrence equation with 
polynomial coefficients (P-finite sequences). 
Both algorithms have in common that while they do succeed on a great
many examples, there is no guarantee for them to terminate, and they
do in fact not terminate for every input.
For some restricted classes of P-finite recurrence equations of order 
up to three we provide a priori criteria that assert the 
termination of the algorithms.
\end{abstract}

\category{I.1.2}{Computing Methodologies}{Symbolic and Algebraic Manipulation}[Algorithms]
\category{G.2.1}{Discrete Mathematics}{Combinatorics}[Recurrences and difference equations]

\terms{Algorithms}

\keywords{P-finite Sequences, Positivity, Cylindrical decomposition}

\section{Introduction}



Inequalities for special functions are a serious challenge, both from the
traditional paper-and-pencil point of view, but also (and in particular) for
computer algebra.  In contrast to the vast number of algorithms for dealing with
identities, almost no algorithms are available for inequalities.  Already for
the very restricted class of sequences satisfying linear recurrence equations
with constant coefficients (C-finite sequences), the positivity problem leads to
hard number theoretic questions to which no solutions are known today,
see~\cite{everest03,gerhold05c} and the references given there for the current
state of the struggle.

Still, inequalities are not entirely hopeless. For example, Mezzarobba and Salvy
have recently given an algorithm for effectively computing tight upper bounds
for sequences defined by linear recurrence equations with polynomial
coefficients (P-finite sequences)~\cite{mezzarobba09}.  Five years ago, Gerhold
and Kauers~\cite{gerhold05a} proposed a method applicable to inequalities
concerning quantities that satisfy recurrence equations of a very general type.
Their method consists of constructing a sequence of polynomial sufficient
conditions that would imply the non-polynomial inequality under
consideration. If one of the conditions in the sequence happens to be true
(which can be detected, e.g., with Cylindrical Algebraic Decomposition~\cite{collins75,collins91,caviness98,basu06}), the
method succeeds, otherwise it keeps on running forever. Simultaneously, the method
searches for counterexamples and it will find one and terminate for every false
inequality.


Despite its simplicity, the method has proven quite successful in
applications. Not only did it provide the first computer proofs of some
special function inequalities from the
literature~\cite{gerhold05a,gerhold06b,kauers07b,kauers07q}, but it even helped
to resolve some open
conjectures~\cite{alzer06a,kauers06i,kauers07q,pillwein08a}. At the same time,
the method remains somewhat unsatisfactory from a computational point of view,
as it is not clear on which inequalities it succeeds and on which it doesn't. It
would be interesting to have, at least for some restricted classes, some a priori
criteria telling us whether the method (or some variation of it) will succeed or
not.


Our goal in this paper is to provide such criteria for two particular proving
procedures (Algorithms~\ref{algo:1} and~\ref{algo:2} described below).
We are far from being able to give a full answer to the question posed in the title, 
but we can identify some nontrivial portions of P-finite recurrence equations of fixed order
on which termination of Algorithms \ref{algo:1} or~\ref{algo:2} is guaranteed.
For first order equations, deciding positivity is trivial.
For second order equations, we provide a result 
(Theorems~\ref{thm:2} and~\ref{thm:3}) that answers the question under a
genericity assumption.
For third order equations, we are able to identify the terminating cases of
Algorithm~\ref{algo:2} but only have partial results for Algorithm~\ref{algo:1}
supplemented by empirical evidence supporting a conjecture concerning its
terminating cases. 
An interesting aspect of our analysis is that algorithms for real quantifier 
elimination are not only used as a subroutine of Algorithms~\ref{algo:1} and~\ref{algo:2},
but they are also contributing in an essential way to the proofs of our termination
theorems. 
It is therefore possible---in principle---to extend our results to equations of
order greater than three. 
Only the increasing time and memory requirements of the computations 
have prevented us from doing so.

\section{Preliminaries}\label{sec:prelim}

A sequence $f\colon\set N\to\field:=\set R\cap\bar{\set Q}$ is called \emph{P-finite} (or
\emph{holonomic}) if there exist polynomials $p_0,\dots,p_r\in\field[x]$, not all zero,
such that
\[
  p_0(n)f(n)+p_1(n)f(n+1)+\cdots+p_r(n)f(n+r)=0
\]
for all $n\in\set N$. Such an equation is called a (P-finite) \emph{recurrence,} and $r$ is
called its order. 
If $p_r(n)\neq0$ for all $n\in\set N$, then the infinite sequence~$f$ is uniquely 
determined by the recurrence and $r$ \emph{initial values} $f(0),f(1),\dots,f(r-1)$. 
The assumption $p_r(n)\neq0$ for all $n\in\set N$ can be adopted without loss of
generality, because we can substitute $g(n)=f(n+u)$ for some $u$ larger than the
greatest integer root of~$p_r$ and then consider $g$ instead of $f$ and check
nonnegativity of the finitely many terms $f(0),f(1),\dots,f(u-1)$ by inspection. 
We will do so.
\begin{center}
 \fbox{\parbox[t]{.8\hsize}{From now on, 
     all recurrences are assumed to have a leading coefficient $p_r$ with no
     positive integer roots.}}
\end{center}

A P-finite recurrence is called \emph{balanced} if $\deg p_0=\deg p_r$ 
and $\deg p_i\leq \deg p_0$ ($i=1,\dots,r$).
The \emph{characteristic polynomial} of a balanced recurrence is defined as
\[
  \lc_y\bigl(p_0(y)+p_1(y)x+p_2(y)x^2+\cdots+p_r(y)x^r\bigr)\in\field[x].
\]
Its roots $\lambda_1,\dots,\lambda_r\in\set C$ are called the \emph{eigenvalues}
of the recurrence. (The $\lambda_i$ are not necessarily distinct.)
Note that for a balanced recurrence, the characteristic polynomial has
always degree~$r$ and it has never $0$ as a root.

An eigenvalue $\lambda_i$ is called \emph{dominant} if $|\lambda_j|\leq|\lambda_i|$
for all $j=1,\dots,r$. Dominant eigenvalues govern the asymptotics of the 
sequences defined by the recurrence~\cite{wimp85,flajolet09}. 
If there is a unique dominant eigenvalue~$\lambda_i$, then for we will usually have
\[
  f(n)\sim c(n)\lambda_i^n
  \quad(n\to\infty)
\]
where $c$ is of subexponential growth in the sense that
\[
  \frac{c(n+1)}{c(n)}\tends{n\to\infty}1.
\]
There may be choices of initial values for which $c(n)=0$ for all~$n$
so that the asymptotics of $f$ is not affected by $\lambda_i$ but by the
next smaller eigenvalue(s). 
Whether this is the case or not can be hard to verify formally, but is usually
easy to verify empirically. 
Some of our termination results apply only to this \emph{generic} situation where 
initial values are chosen such as to actually exhibit the asymptotic behavior 
predicted by the dominant eigenvalue. 

Finally, if the dominant eigenvalue $\lambda_i$ is not real and positive, 
then it is clear that $f$ will be ultimately oscillating, and so $f(n)\geq0$
cannot possibly be true for all~$n$. This case can be sorted out trivially
beforehand, and we may therefore assume that the unique dominant eigenvalue is
real and positive. In this case, the substitution $g(n)=f(n)/\lambda_i^n$
turns the recurrence
\[
  p_0(n)f(n)+p_1(n)f(n+1)+\cdots+p_r(n)f(n+r)=0
\]
into
\[
  p_0(n)g(n)+p_1(n)\lambda_i g(n+1)+\cdots+p_r(n)\lambda_i^rg(n+r)=0
\]
whose dominant eigenvalue is~$1$. As $g(n)\geq0\iff f(n)\geq0$, it suffices
to consider this case.

\section{Induction Based Proving\hskip0ptplus1fill\break Procedures}

\subsection{The Original Version}\label{sec:gk}

The approach of~\cite{gerhold05a} is as follows. Suppose that $f\colon\set N\to\field$
is defined by a recurrence
\[
  p_0(n)f(n)+p_1(n)f(n+1)+\cdots+p_r(n)f(n+r)=0
\]
and initial values $f(0)=f_0,f(1)=f_1,\dots,f(r-1)=f_{r-1}$.
We seek to prove $f(n)\geq0$ for all $n\in\set N$ by induction:
\[
  f(n)\geq0\land\cdots\land f(n+r-1)\geq0\Longrightarrow
  f(n+r)\geq0.
\]
Because of the recurrence, this is equivalent to 
\begin{alignat*}1
  &f(n)\geq0\land\cdots\land f(n+r-1)\geq0\\
  &\quad\Longrightarrow
  -\frac{p_0(n)}{p_r(n)}f(n)
  -\cdots
  -\frac{p_{r-1}(n)}{p_r(n)}f(n+r-1)\geq0
\end{alignat*}
For this to be true for all $n\in\set N$, it is sufficient that the
\emph{induction step formula}
\begin{alignat*}1
  &\forall\ y_0,y_1,\dots,y_{r-1}\in\set R
  \ \forall\ x\in\set R:\\
  &\quad \bigl(x\geq0 \land y_0\geq0\land\cdots\land y_{r-1}\geq0\bigr)\\
  &\qquad \Longrightarrow 
  -\frac{p_0(x)}{p_r(x)}y_0
  -\cdots
  -\frac{p_{r-1}(x)}{p_r(x)}y_{r-1}\geq0
\end{alignat*}
is true, and this can be decided by a quantifier elimination algorithm. 
If it is true, the induction step is established and $f$ is nonnegative
everywhere if and only if it is nonnegative for $n=0,\dots,r-1$, which can
be checked.

In the unlucky case when the induction step formula is false, there is
no immediate conclusion about $f$ that could be drawn. In this case, 
\emph{refined induction step formulas} 
\[
  f(n)\geq0\land\cdots\land f(n+\varrho-1)\geq0\Longrightarrow
  f(n+\varrho)\geq0
\]
for $\varrho>r$ are constructed. Using the recurrence, each
term $f(n+i)$ can be rewritten as a linear combination of $f(n)$, \dots, $f(n+r-1)$
with rational function coefficients, and using this rewriting, the 
refined induction step formula takes the form
\begin{alignat*}1
\Phi(\varrho):={}
  &\forall\ y_0,y_1,\dots,y_{r-1}\in\set R
  \ \forall\ x\in\set R:\\
  &\quad \bigl(x\geq0 \land y_0\geq0\land\cdots\land y_{r-1}\geq0\\
  &\qquad \land q_{r,0}(x)y_0+\cdots+q_{r,r-1}(x)y_{r-1}\geq0\\
  &\qquad \land q_{r+1,0}(x)y_0+\cdots+q_{r+1,r-1}(x)y_{r-1}\geq0\\[-4pt]
  &\qquad\,\ \vdots \\
  &\qquad \land q_{\varrho-1,0}(x)y_0+\cdots+q_{\varrho-1,r-1}(x)y_{r-1}\geq0\bigr)\\
  &\qquad\quad \Longrightarrow 
     q_{\varrho,0}(x)y_0+\cdots+q_{\varrho,r-1}(x)y_{r-1}\geq0,
\end{alignat*}
where the $q_{i,j}$ are some rational functions. 

The full method then reads as follows.

\begin{algorithm}\label{algo:gerhold-kauers}\label{algo:1} \null \hfill\\
\textit{Input:} A P-finite recurrence of order~$r$ and a vector of initial values defining a 
sequence $f\colon\set N\to\field$.\\
\textit{Output:} $\true$ if $f(n)\geq0$ for all $n\in\set N$,
$\false$ if $f(n)<0$ for some $n\in\set N$, 
possibly no output at all.

\kern-\smallskipamount

\begin{algo}%
|for| n = 0 |to| r-1 |do|
^^I |if| f(n) <0 |then| |return| \false
|for| n = r,r+1,r+2,r+3,\dots |do|\label{algo:1:3}
^^I |if| \Phi(n) |then| |return| \true
^^I |if| f(n)< 0 |then| |return| \false\label{algo:1:5} %
\end{algo}
\end{algorithm}

\begin{example}
  Let $f\colon\set N\to\field$ be defined by
  \begin{alignat*}1
    &(2n+13)f(n+3)-(5n+22)f(n+2)\\
    &\quad{}+(3n+20)f(n+1)-(2n+7)f(n)=0,\\
    &f(0)=f(1)=f(2)=1.
  \end{alignat*}
  We use Algorithm~\ref{algo:1} to show that $f(n)\geq0$ for all $n\in\set N$.

  Since $f(0),f(1),f(2)\geq0$, we enter the loop in line~\ref{algo:1:3}.
  For $n=3$, we have
  \begin{alignat*}1
   \Phi(n)&=\forall\ y_0,y_1,y_2\ \forall\ x\in\set R:\\
   &\quad\bigl(x\geq0\land y_0\geq0\land y_1\geq0\land y_2\geq0\bigr)\\
   &\qquad\Longrightarrow\tfrac{2x+7}{2x+13}y_0-\tfrac{3x+20}{2x+13}y_1+\tfrac{5x+22}{2x+13}y_2\geq0.
  \end{alignat*}
  This is false, but since $f(3)=9/13>0$ (checked in line~\ref{algo:1:5}), we continue.

  The formula $\Phi(4)$ is too lengthy to be reproduced here explicitly, and it
  is also false.  Yet $f(4)=61/195\geq0$, so we proceed to consider the even
  lengthier formula~$\Phi(5)$, which turns out to be true. At this point the
  algorithm terminates with output $\true$.
\end{example}

\subsection{A Variation}

In cases where Algorithm~\ref{algo:gerhold-kauers} does not terminate, it is 
sometimes possible to prove inductively the stronger statement that $f(n)$ is increasing, 
viz. that $f(n+1)\geq f(n)$ for all $n\geq0$. 
While this is obviously a sufficient condition for $f(n)\geq0$ for all~$n$,
there are of course sequences $f$ which are non-negative but not increasing.
For such cases, a good strategy is to prove that $\mu^{-n} f(n)$ is increasing,
for some suitably chosen constant $\mu>0$. 
The choice of $\mu$ is critical in two respects: it must be small enough to 
assure that $\mu^{-n} f(n)$ actually is increasing, and it must be big enough
to allow for an inductive proof. 
The following algorithm proves positivity of a P-finite sequence~$f$ by 
searching for a $\mu$ that meets both criteria. 

\begin{algorithm}\label{algo:2} \null \hfill\\
\textit{Input:} A P-finite recurrence of order~$r$ and a vector of initial values defining a 
sequence $f\colon\set N\to\field$.\\
\textit{Output:} $\true$ if $f(n)\geq0$ for all $n\in\set N$,
$\false$ if $f(n)<0$ for some $n\in\set N$, 
possibly no output at all.

\kern-\smallskipamount

\begin{algo}%
\parbox[t]{.9\hsize}{Determine a quantifier free formula $\Phi(\xi,\mu)$ equivalent to %
\begin{alignat*}1 %
  &\forall\ y_0,\dots,y_{r-1}\ \forall\ x\geq \xi: \\ %
  &\quad \Bigl(y_0\geq0 \land  y_1\geq \mu y_0 \land \cdots \land y_{r-1}\geq\mu y_{r-2}\Bigr)\\ %
  &\qquad\Longrightarrow -\frac{p_0(x)}{p_r(x)}y_0 -\cdots -\frac{p_{r-1}(x)}{p_r(x)}y_{r-1} \geq \mu y_{r-1} %
\end{alignat*}}\label{algo:3:4}
^^I|for| n=0,1,2,3,\dots |do|\label{algo:3:5}
^^I^^I|if| f(n)<0 |then|
^^I^^I^^I |return| \false
^^I^^I\!\!
\parbox[t]{.775\hsize}{$|else| |if| \exists\ \mu\geq0:\Phi(n,\mu)\land f(n+1)\geq\mu f(n)$\\ %
$\null\qquad{}\land\cdots\land f(n+r-1)\geq\mu f(n+r-2) |then|$}\label{algo:3:8}
^^I^^I^^I |return| \true %
\end{algo}
\end{algorithm}

\begin{theorem}
  Algorithm~\ref{algo:2} is correct.
\end{theorem}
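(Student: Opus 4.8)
The plan is to establish \emph{soundness} of the two possible outputs of Algorithm~\ref{algo:2}; since the statement makes no claim about termination, nothing more is required. The output $\false$ is trivially correct, since the algorithm emits it only immediately after it has verified $f(n)<0$ for the current index~$n$.

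The substance lies in the output $\true$. Suppose the algorithm returns $\true$ in the iteration with index~$n$. Then two facts are available: first, the loop did not return $\false$ for any index $\le n$, so $f(0),\dots,f(n)\ge0$; second, the existential test on line~\ref{algo:3:8} has produced a constant $\mu\ge0$ with $\Phi(n,\mu)$ true and $f(n+i)\ge\mu f(n+i-1)$ for $i=1,\dots,r-1$. I would then prove, by induction on $m\ge n$, the strengthened invariant
\[
  P(m):\qquad f(m)\ge0\quad\text{and}\quad f(m+i)\ge\mu f(m+i-1)\ \ (i=1,\dots,r-1).
\]
This is the assertion that the $\mu$-normalised sequence $k\mapsto\mu^{-k}f(k)$ does not decrease around index~$m$; introducing the auxiliary parameter~$\mu$ is precisely the device that makes such an invariant inductive, whereas the bare statement ``$f(m)\ge0$'' is not.

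The base case $P(n)$ is nothing but the two facts just recorded. For the inductive step, assume $P(m)$ for some $m\ge n$, and instantiate the universally quantified variables of $\Phi(n,\mu)$ by $x:=m$ --- admissible since $m\ge n$ --- and $y_j:=f(m+j)$, $j=0,\dots,r-1$. The hypotheses $y_0\ge0$ and $y_j\ge\mu y_{j-1}$ occurring in $\Phi$ are then literally the conjuncts of $P(m)$, and by the recurrence the left-hand side of the conclusion of $\Phi$ equals $f(m+r)$; hence $f(m+r)\ge\mu f(m+r-1)$. Chaining this with the inequalities of $P(m)$ and using $\mu\ge0$ yields $f(m+1),\dots,f(m+r)\ge0$, and the conjuncts required for $P(m+1)$ are, for $i=1,\dots,r-1$, either conjuncts of $P(m)$ (when $i\le r-2$) or the newly derived $f(m+r)\ge\mu f(m+r-1)$ (when $i=r-1$). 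So $P(m+1)$ holds, the induction closes, and $f(m)\ge0$ for all $m\ge n$; together with $f(0),\dots,f(n)\ge0$ this gives $f(n)\ge0$ for all $n\in\set N$, so the output $\true$ is correct.

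I would conclude with the remark that the procedure is genuinely effective --- $\Phi$ is computed by real quantifier elimination, and the test on line~\ref{algo:3:8} is an existential real sentence in the single variable $\mu$ whose coefficients $f(n),\dots,f(n+r-1)$ are computable elements of~$\field$, hence decidable. I do not expect a serious obstacle; the only point that genuinely needs care is choosing the invariant $P(m)$ and checking that, under the substitution $x\mapsto m$, $y_j\mapsto f(m+j)$, the hypotheses of $\Phi$ coincide with $P(m)$ while its conclusion delivers exactly the one new inequality needed to advance the induction. Everything else is index bookkeeping.
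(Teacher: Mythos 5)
Your proof is correct and follows essentially the same route as the paper's: the $\false$ branch is trivially sound, and the $\true$ branch is justified by inductively propagating the chain of inequalities $f(k+1)\ge\mu f(k)$ from index $n$ onward using $\Phi(n,\mu)$, then concluding nonnegativity from $\mu\ge0$ and $f(n)\ge0$. The paper merely compresses into one sentence the induction you spell out with the invariant $P(m)$; your version is a more explicit rendering of the same argument.
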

\begin{proof}
  Correctness is obvious whenever the algorithm returns~$\false$, 
  because this happens only when an explicit point $n$ with $f(n)<0$ has been found. 
  Suppose now that the algorithm returns~$\true$ at the $n$th iteration of the for loop.
  Then $f(k)\geq0$ for $k=0,\dots,n$, otherwise the algorithm would have terminated
  in an earlier iteration with output $\false$.
  The condition in line~\ref{algo:3:8} inductively implies 
  \[
    \exists\ \mu\geq0 \ \forall\ k \geq n : f(k+1)\geq \mu f(k).
  \]
  Since $\mu\geq0$ and $f(n)\geq0$, this inductively implies $f(k)\geq0$ also for all $k>n$.
\end{proof}

\begin{example}
  Let $f\colon\set N\to\field$ be defined by
  \begin{alignat*}1
    &(n+3)f(n+3)-(5n+13)f(n+2)\\
    &\quad{}+(5n+12)f(n+1)-(n+2)f(n)=0,\\
    &f(0)=1,\ f(1)=1/4,\ f(2)=1/10.
  \end{alignat*}
  Algorithm~\ref{algo:gerhold-kauers} does not seem to terminate for this sequence.
  But Algorithm~\ref{algo:2} succeeds.

  Step 1 produces the quantifier free formula
  \[
    \xi \geq 0\land \frac{5-\sqrt{5}}{2}\leq \mu \leq \frac{\sqrt{5 \xi ^2+22
    \xi +25}+5 \xi +13}{2 (\xi +3)},
  \]
  which we denote $\Phi(\xi,\mu)$.
  In the iteration of the for loop, we get:

  For $n=0$, since $f(0)=1\geq0$, we check whether 
  \[
    0\geq0\land \frac{5-\sqrt5}2\leq\mu\leq 3
    \land \frac14\geq\mu 
    \land \frac1{10}\geq\frac\mu4
  \]
  is satisfiable. As it is not, we proceed.

  Also $n=1$ and $n=2$, we have $f(n)\geq0$ but there is no $\mu\geq0$ with 
  \[
   \Phi(n,\mu)\land f(n+1)\geq\mu f(n)\land f(n+2)\geq\mu f(n+1).
  \]
  Then for $n=3$, since $f(3)\geq0$, we check whether
  \[
    3\geq0\land
    \frac{5-\sqrt{5}}2\leq\mu\leq\frac{28+2\sqrt{34}}{12}
    \land\frac{17}{80}\geq\frac{\mu}{10}
    \land\frac{247}{400}\geq\frac{17\mu}{80}
  \]
  is satisfiable. As it is satisfiable (e.g., by~$\mu=2$)
  the algorithm terminates with output $\true$.
\end{example}


The two strategies employed in Algorithms~\ref{algo:1} and~\ref{algo:2} in the case where
a direct proof of the induction step formula fails (prolonging the induction hypothesis in
Algorithm~\ref{algo:1} versus multiplying with a positivity preserving exponential in
Algorithm~\ref{algo:2}) are independent of each other.  It is possible to merge both
strategies into a single strategy that simultaneously prolongs the induction hypothesis
and inserts a positivity preserving exponential.  An algorithm based on such a combined
strategy is easily written down, but turns out to be computationally quite expensive on
examples. It would be interesting to carry out the termination analysis given below for
the combined algorithm, but the quantifier elimination problems arising in this analysis
seem currently too hard to be carried out for the combined algorithm.







\section{Terminating Cases}

Both algorithms given in the previous section may fail to terminate. 
Our goal now is to identify classes of P-finite recurrence equations 
for which termination can be guaranteed a priori. 

\subsection{Order One}

This case is rather simple and included here merely for the sake of completeness. 
If $f\colon\set N\to\field$ satisfies
\[
  p_0(n)f(n)+p_1(n)f(n+1)=0,
\]
then $f(n)\geq0$ for all $n\in\set N$ if and only if $f(0)\geq0$ and
$-p_0(n)/p_1(n)\geq0$ for all $n\in\set N$. 
Since sign changes of $-p_0(n)/p_1(n)$ can occur only at the real roots of
$p_0$ or~$p_1$, the only thing we need to do is to find an upper bound
$n_0\in\set R$ for the real roots (this can be done), and then check
whether $-p_0(n)/p_1(n)\geq0$ for $n=0,1,2,\dots,n_0+1$. 

\begin{example}
  Consider $f\colon\set N\to\field$ defined via
  \[
    (3n-16)f(n)-(3n-17)f(n+1)=0,\quad f(0)=1.
  \]
  The roots of $p_0,p_1$ are $16/3$ and $17/3$, respectively, 
  and they are both less than $n_0=6$, for instance. 
  Therefore, since $f(n)\geq0$ for $n=0,\dots,6$, we can conclude
  that $f(n)\geq0$ for all~$n\in\set N$.
\end{example}

\subsection{Order Two}

We now turn to sequences $f\colon\set N\to\field$ which are defined by a balanced P-finite 
recurrence of second order,
\[
  p_2(n)f(n+2)-p_1(n)f(n+1)-p_0(n)f(n)=0.
\]
We assume (without loss of generality) that $1$ is a dominant eigenvalue of this recurrence and 
let $u\in\field$ with $|u|<1$ be such that 
\[
  (x-1)(x-u)=x^2-(u+1)x-(-u)
\]
is the characteristic polynomial of the recurrence. 
The question is whether Algorithm~\ref{algo:1} and Algorithm~\ref{algo:2} succeed in
proving that $f(n)\geq0$ for all~$n$. (If this is actually the case; if it is not,
then both algorithms will obviously succeed in finding a counterexample.)
We will show that termination of Algorithm~\ref{algo:1} depends on the sign of~$u$
whereas Algorithm~\ref{algo:2} (generically) terminates for all~$u$.

\begin{theorem}\label{thm:2}
  If $u\in(-1,0)$, then Algorithm~\ref{algo:1} terminates.
\end{theorem}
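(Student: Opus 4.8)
The plan is to reduce the problem to a concrete quantifier-elimination computation on the coefficients of the recurrence and its "shifted" iterates, exploiting the normalization that the dominant eigenvalue is $1$ and the subdominant one is $u\in(-1,0)$. First I would recall that, under the assumption that $1$ is a dominant eigenvalue, the solution space is spanned by a sequence $g_1(n)$ with $g_1(n)\to c>0$ (the "generic" solution, in the sense of the preliminaries) and a sequence $g_u(n)$ behaving like $u^n$ times a subexponential factor; hence $f(n) = c + O((|u|+\varepsilon)^n)$ and in particular $f(n)$, $f(n+1)/f(n)$, and the relevant linear forms in consecutive values converge to explicit limits as $n\to\infty$. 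The key point is that $u<0$ forces the correction term to \emph{oscillate} in sign while decaying, and this oscillation is exactly what makes the prolonged induction hypothesis in $\Phi(\varrho)$ eventually strong enough: the extra inequalities $q_{i,j}(x)y_0 + \cdots + q_{i,j'}(x)y_{r-1}\ge 0$ for several consecutive iterates $i=r,\dots,\varrho-1$ cut out, in the $(x,y_0,\dots,y_{r-1})$-space, a region that shrinks toward the ray corresponding to the dominant solution, on which the conclusion inequality holds strictly in the limit.

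The concrete steps I would carry out are: (1) write down $\Phi(\varrho)$ explicitly for $r=2$, so that it becomes a statement $\forall x\ge \xi_0\ \forall y_0,y_1$: (finitely many linear-in-$y$ hypotheses with rational-in-$x$ coefficients) $\Rightarrow$ (one linear-in-$y$ conclusion); (2) compute the limits as $x\to\infty$ of all coefficient rational functions $q_{i,j}(x)$ — these limits are governed by the characteristic polynomial $(x-1)(x-u)$ and are therefore explicit algebraic functions of $u$; (3) show that in the limit formula, the polyhedral cone defined by the hypotheses is, for $\varrho$ large enough, contained in the half-space defined by the conclusion, using the sign pattern induced by $u<0$ (this is where the case hypothesis $u\in(-1,0)$, as opposed to $u\in(0,1)$, is essential: for $u>0$ the correction does not oscillate and the cone never closes up); (4) upgrade the "limit" statement to an "eventually" statement by a continuity/uniformity argument — since the $q_{i,j}$ are rational functions with the stated limits, the strict containment of cones persists for all $x\ge\xi_0$ once $\xi_0$ is taken large enough, and after the substitution $g(n)=f(n+\xi_0)$-type shift allowed by the preliminaries we may assume it holds for all $x\ge 0$; (5) finally check that the finitely many initial inequalities $f(0),\dots,f(\varrho-1)\ge0$ hold — but these hold automatically in the case where $f$ really is nonnegative, which is the only case in which termination with output $\true$ is in question.

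I expect the main obstacle to be step (3)–(4): proving that the prolonged hypothesis cone is \emph{actually} contained in the conclusion half-space for some explicit finite $\varrho$, uniformly in $x$, rather than merely in the limit. The delicate part is that the defining inequalities of the cone involve $x$-dependent coefficients, so one cannot just argue in a fixed polyhedron; one must show that the family of cones, parametrized by $x\in[\xi_0,\infty]$, degenerates continuously to the limiting cone and that the conclusion functional is strictly positive on the limiting cone minus the origin. Here I would either invoke a quantifier-elimination computation on the (finitely many, once $\varrho$ is fixed) polynomial inequalities in $x$ and the coefficients of $\Phi(\varrho)$ — which the introduction explicitly flags as part of the method of proof — or give a direct compactness argument on the projectivized $(x,y_0,y_1)$-space. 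A secondary technical nuisance is bounding how large $\varrho$ must be taken as a function of $u$; I would aim to show a single uniform choice (e.g. $\varrho=r+2$ or a small explicit value) works for all $u\in(-1,0)$, or, failing a clean bound, that \emph{some} finite $\varrho$ works for each $u$, which already suffices for termination of Algorithm~\ref{algo:1} since that algorithm tries all $\varrho$ in turn.
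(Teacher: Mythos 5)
Your proposal has the right general shape (asymptotic analysis of the coefficient functions, plus a finite prolongation $\varrho$), but it misidentifies the mechanism and leaves the crucial step unproved. First, the claimed key point --- that for $u<0$ the oscillation of the subdominant term makes the extra hypotheses of $\Phi(\varrho)$ cut the cone down toward the ray of the dominant solution --- is not what happens, and taken literally it is false. For $u\in(-1,0)$ the limiting coefficients are $q_{i,0}\to\frac{u-u^{i}}{u-1}>0$ and $q_{i,1}\to\frac{u^{i}-1}{u-1}>0$ for every $i\ge2$, so asymptotically \emph{all} the prolonged hypotheses are consequences of $y_0\ge0\land y_1\ge0$ and the cone does not shrink at all; it stays the full positive quadrant. (It is for $u\in(0,1)$ that the cone genuinely shrinks --- just never fast enough, which is why the algorithm fails there.) The actual reason termination holds is much simpler: since the recurrence coefficients satisfy $p_0(n)/p_2(n)\to-u>0$ and $p_1(n)/p_2(n)\to u+1>0$, the \emph{plain one-step} induction formula is true for all $x\ge n_0$ with $n_0$ large enough; the prolongation is needed only to bridge the finitely many small $x$ where these signs may not yet have settled.

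Second, and more seriously, your step (4) disposes of exactly that bridging problem with an invalid move. $\Phi(\varrho)$ for the recurrence actually given to Algorithm~\ref{algo:1} quantifies over all $x\ge0$; you cannot ``assume it holds for all $x\ge0$'' by replacing $f$ with a shifted sequence, because the algorithm is not run on the shifted recurrence. The paper closes this gap by a change of variables: writing $f(x+n_0)=r_0(x)f(x)+r_1(x)f(x+1)$ and $f(x+n_0+1)=r_2(x)f(x)+r_3(x)f(x+1)$ and substituting $y_0\mapsto r_0y_0+r_1y_1$, $y_1\mapsto r_2y_0+r_3y_1$ into the (true) shifted one-step formula, one obtains an implication whose hypotheses are the last two constraints of $\Phi(n_0+2)$ and whose conclusion is the conclusion of $\Phi(n_0+2)$; adding the remaining hypotheses only weakens the statement, so $\Phi(n_0+2)$ is true and the algorithm stops by that iteration. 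Without this identification your argument only shows that a shifted analogue of the induction step is eventually valid, not that any $\Phi(\varrho)$ tested by the algorithm is true. Your step (5) is harmless (if some $f(n)<0$ the algorithm returns $\false$ anyway), but steps (3)--(4) as written do not constitute a proof.
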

\begin{proof}
  Rewrite the recurrence in the form
  \[
    f(n+2)=\frac{p_1(n)}{p_2(n)}f(n+1)+\frac{p_0(n)}{p_2(n)}f(n).
  \]
  Since the characteristic polynomial is 
  \[
    (x-1)(x-u)=x^2-(u+1)x-(-u),
  \]
  we have
  \begin{alignat*}3
    \frac{p_1(n)}{p_2(n)}\tends{n\to\infty} u+1>0
    \text{ \ and \ }
    \frac{p_0(n)}{p_2(n)}\tends{n\to\infty} {-u} >0.
  \end{alignat*}
  Therefore,
  \[
    \exists\ n_0\in\set N\ \forall\ n\geq n_0:
    \frac{p_1(n)}{p_2(n)}>0
    \land
    \frac{p_0(n)}{p_2(n)}>0,
  \]
  where we may safely regard $n$ as ranging not only over the integers
  but over all reals except the roots of~$p_2$. 
  We show that Algorithm~\ref{algo:1} terminates after at most $n_0$
  iterations. 
  The previous formula implies
  \begin{alignat*}1
    &\forall\ y_0,y_1\ \forall\ n\geq n_0:
    \bigl(y_0\geq0\land y_1\geq 0\bigr)\\
    &\quad
    \Longrightarrow \frac{p_0(n)}{p_2(n)}y_0 + \frac{p_1(n)}{p_2(n)}y_1\geq0.
  \end{alignat*}
  Substituting $n\mapsto n-n_0$ leads to
  \begin{alignat*}1
    &\forall\ y_0,y_1\ \forall\ n\geq 0:
    \bigl(y_0\geq0\land y_1\geq 0\bigr)\\
    &\quad\Longrightarrow \frac{p_0(n+n_0)}{p_2(n+n_0)}y_0 + \frac{p_1(n+n_0)}{p_2(n+n_0)}y_1\geq0.
  \end{alignat*}
  As the variables $y_0,y_1$ range over all reals, the latter formula will remain
  true if we apply a substitution
  \begin{alignat*}1
     &y_0\mapsto r_0(n)y_0+r_1(n)y_1, \\
     &y_1\mapsto r_2(n)y_0+r_3(n)y_1,
  \end{alignat*}
  where $r_0,\dots,r_3$ are some rational functions in~$n$. 
  This gives
  \begin{alignat*}1
    &\forall\ y_0,y_1\ \forall\ n\geq 0:
    \bigl(r_0(n)y_0+r_1(n)y_1\geq0\\
    &\hphantom{\forall\ y_0,y_1\ \forall\ n\geq 0:\bigl(}\quad \land r_2(n)y_0+r_3(n)y_1\geq0\bigr)\\
    &\quad\Longrightarrow \frac{p_0(n+n_0)r_0(n)+p_1(n+n_0)r_2(n)}{p_2(n+n_0)}y_0\\
    &\quad\hphantom{{}\Longrightarrow{}}\quad{+}\frac{p_0(n+n_0)r_1(n)+p_1(n+n_0)r_3(n)}{p_2(n+n_0)}y_1\geq0.
  \end{alignat*}
  We are free to further modify this formula, without harming its truth, 
  by imposing arbitrary additional conditions on the left hand side of the implication.

  By choosing $r_0,r_1,r_2,r_3$ such that 
  \begin{alignat*}1
    f(n+n_0)&=r_0(n)f(n)+r_1(n)f(n+1)\\
    f(n+n_0+1)&=r_2(n)f(n)+r_3(n)f(n+1)
  \end{alignat*}
  and adding constraints $q_{i,0}(n)y_0+q_{i,1}(n)y_1\geq0$ encoding $f(n+i)\geq0$ ($i=0,\dots,n_0-1$)
  on the hypothesis part, we obtain precisely the formula $\Phi(n_0)$ as defined in Section~\ref{sec:gk}
  and used in Algorithm~\ref{algo:1}.
  As the formula is true, the algorithm terminates in the $n_0$th iteration (or earlier),
  as we wanted to show.
\end{proof}

\begin{rem}
  Algorithm~\ref{algo:1} fails to terminate for positive~$u$.
  To see this, consider the C-finite recurrence
  \[
    f(n+2)-(u+1)f(n+1)+u f(n)=0
  \]
  for some $u\in(0,1)$. If Algorithm~\ref{algo:1} applied to
  this recurrence terminated in the $n_0$th iteration, for some
  $n_0\geq0$, then the truth of $\Phi(n_0)$ implies that \textbf{no}
  solution $f\colon\set N\to\field$ of the recurrence can have
  $n_0$ consecutive nonnegative terms followed by a negative term.
  (So that, if $n_0$ consecutive terms are found nonnegative, all
  subsequent terms must be nonnegative as well.)

  To see that no such $n_0$ can exist for the C-finite recurrence
  above, it is sufficient to construct for every $n_0\geq0$ a 
  solution which contains a run of exactly $n_0$ nonnegative
  terms. The general solution of the recurrence
  is $c_0 + c_1 u^n$ for some constants $c_0,c_1\in\field$.
  It is easily checked that the choice
  $c_0=-1$, $c_1=u^{-n_0+1}$ has the desired property.

  The argument extends, at least for generic initial values, to
  P-finite balanced recurrence equations, using the fact that
  the recurrence admits two solutions
  $f_1,f_2\colon\set N\to\field$ with $f_1(n+1)/f_1(n)\tends{n\to\infty}1$ and 
  $f_2(n+1)/f_2(n)\tends{n\to\infty}u$. 
\end{rem}

\begin{theorem}\label{thm:3}
  If $u\in(-1,1)\setminus\{0\}$, then Algorithm~\ref{algo:2} terminates for 
  generic initial values.
\end{theorem}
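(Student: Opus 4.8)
The plan is to exhibit, for generic initial values, an explicit constant $\mu_0$ and an index $n$ for which the existential test in line~\ref{algo:3:8} succeeds; the value of $\mu_0$ is dictated entirely by the two eigenvalues $1$ and~$u$. First I would fix the setup and clear away the trivial cases. By the asymptotic theory of P-finite sequences (\cite{wimp85,flajolet09}; see also the remark following Theorem~\ref{thm:2}), the recurrence has a solution basis $f_1,f_2$ with $f_1(n+1)/f_1(n)\tends{n\to\infty}1$ and $f_2(n+1)/f_2(n)\tends{n\to\infty}u$, where $f_1$ is eventually of constant sign and, since $|u|<1$, $f_2(n)=o(f_1(n))$. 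For generic initial values $f=a_1f_1+a_2f_2$ with $a_1\neq0$, hence $f(n)=a_1f_1(n)\bigl(1+o(1)\bigr)$; in particular $f(n+1)/f(n)\tends{n\to\infty}1$ and $f$ is eventually of constant sign. If that sign is negative, or if $f(n)<0$ for some~$n$, then Algorithm~\ref{algo:2} returns $\false$ and terminates. So from now on assume $f(n)\geq0$ for all~$n$ and $f(n)>0$ for all sufficiently large~$n$.

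Next I would make the formula $\Phi(\xi,\mu)$ produced in line~\ref{algo:3:4} explicit for $r=2$. Abbreviating $a(x)=p_0(x)/p_2(x)$ and $b(x)=p_1(x)/p_2(x)$, the body of $\Phi$ is $a(x)y_0+\bigl(b(x)-\mu\bigr)y_1\geq0$ under the hypotheses $y_0\geq0$ and $y_1\geq\mu y_0$. For each real $x$ with $p_2(x)\neq0$ and each $\mu\geq0$ the hypothesis region in the $(y_0,y_1)$-plane is a cone with extreme rays $(0,1)$ and $(1,\mu)$, so the body holds for all admissible $(y_0,y_1)$ if and only if it holds at those two rays. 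Hence $\Phi(n,\mu)$ is equivalent to $n\geq0$ together with
\[
  b(x)-\mu\geq0\quad\text{and}\quad a(x)+\mu\bigl(b(x)-\mu\bigr)\geq0
  \qquad\text{for all real }x\geq n\text{ with }p_2(x)\neq0.
\]
As in the proof of Theorem~\ref{thm:2}, $b(x)\tends{x\to\infty}u+1$ and $a(x)\tends{x\to\infty}-u$, so the two left-hand sides tend, as $x\to\infty$, to $u+1-\mu$ and to $-(\mu-1)(\mu-u)$, respectively.

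Finally I would pick the witness $\mu_0$. The open interval $I:=\bigl(\max(0,u),\min(1,1+u)\bigr)$ equals $(u,1)$ if $u\in(0,1)$ and $(0,1+u)$ if $u\in(-1,0)$, so $I\neq\emptyset$; fix any $\mu_0\in I$. Then $\mu_0<1+u$ yields $u+1-\mu_0>0$, and $u<\mu_0<1$ yields $(\mu_0-1)(\mu_0-u)<0$, so both limiting values computed above are strictly positive. Since $a$ and $b$ are continuous and convergent beyond the finitely many real roots of~$p_2$, there is an integer $N_0$ past all those roots with $b(x)-\mu_0>0$ and $a(x)+\mu_0(b(x)-\mu_0)>0$ for all $x\geq N_0$; equivalently, $\Phi(n,\mu_0)$ holds for every integer $n\geq N_0$. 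Separately, $f(n+1)/f(n)\tends{n\to\infty}1>\mu_0$ and $f(n)>0$ eventually, so $f(n+1)\geq\mu_0f(n)$ for all $n\geq N_1$ for some $N_1$. At the iteration $n=\max(N_0,N_1)$ of the loop in line~\ref{algo:3:5} we have $f(n)\geq0$, so the algorithm reaches line~\ref{algo:3:8}, where the existential test is satisfied by $\mu=\mu_0$; hence Algorithm~\ref{algo:2} returns $\true$ no later than this iteration.

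The only genuinely nontrivial ingredient, and the step I expect to be the main obstacle to write cleanly, is the asymptotic input used in the first step: that for generic initial values $f(n+1)/f(n)\tends{n\to\infty}1$ and $f$ is eventually of one sign. This is where the results of~\cite{wimp85,flajolet09} enter; if one wishes to avoid invoking the full Birkhoff--Trjitzinsky machinery, it suffices to know that the solution associated with the dominant eigenvalue~$1$ has exactly these two properties, which is precisely the genericity hypothesis spelled out in Section~\ref{sec:prelim}. Everything else in the argument is elementary.
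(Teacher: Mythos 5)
Your proof is correct, and it follows the same overall strategy as the paper: find a single $\mu\in(0,1)$ for which the induction step formula holds for the limiting coefficients $(-u,u+1)$, transfer this to all sufficiently large $n$ via the convergence of $p_0/p_2$ and $p_1/p_2$, and then use the generic asymptotics $f(n+1)/f(n)\to1>\mu$ to satisfy the side condition in line~\ref{algo:3:8}. The one genuine difference is how the key step is certified. The paper introduces auxiliary semialgebraic sets $D_2,D_3$ and verifies by CAD computations both that a suitable $\mu$ exists for every point of $D_2$ and that membership in $D_3$ implies the induction step formula; the admissible $\mu$ is then obtained abstractly and a neighborhood argument (openness of $D_3$) transfers it to finite~$n$. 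You instead eliminate the quantifiers over $(y_0,y_1)$ by hand, observing that the hypothesis region is the cone generated by $(0,1)$ and $(1,\mu)$, so the implication reduces to the two ray conditions $b(x)\geq\mu$ and $a(x)+\mu\bigl(b(x)-\mu\bigr)\geq0$; passing to the limit and factoring gives the explicit admissible window $\mu_0\in\bigl(\max(0,u),\min(1,1+u)\bigr)$. This buys transparency and an explicit witness where the paper has a machine-verified existence statement, at the cost of being tied to the two-dimensional geometry: the paper's $D_3$/$D_4$ formulation is what carries over to the order-three case of Theorem~\ref{thm:5}, where the analogous hand computation would be considerably messier. Both arguments lean equally on the unproved (but standard, and explicitly assumed as ``genericity'') asymptotic facts about the solution basis, so you are not incurring any extra debt there.
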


\begin{proof}
  Consider the set $D_3\subseteq\set R^3$ consisting of all points $(c_0,c_1,\mu)$
  satisfying
  \[
    0<\mu <1\land \mu < c_1 < 2 \land \mu(\mu-c_1)<c_0<1
  \]
  and the set
  \[
    D_2:=\{\,(c_0,c_1)\in\set R^2: 0<c_1<2\land -\tfrac14c_1^2<c_0<1\,\}.
  \]
  It can be shown by CAD computations that 
  \begin{alignat}1
    \forall\ (c_0,c_1)\in D_2\ 
         \exists\ \mu\in(0,1):(c_0,c_1,\mu)\in D_3\label{proof:2}
  \end{alignat}
  and that
  \begin{alignat}1
    &\forall\ (c_0,c_1,\mu)\in D_3\
    \forall\ y_0,y_1\in\set R:\notag\\
    &\quad\bigl(y_0\geq 0 \land y_1\geq \mu y_0\bigr)
    \Longrightarrow c_0 y_0+c_1y_1\geq\mu y_1.\label{proof:1}
  \end{alignat}
  Since $\tfrac14(u+1)^2>u$ for all $u\in(-1,1)$, 
  the set $D_2$ contains in particular the point~$(-u,u+1)$ where $u$ is from the
  statement of the theorem. 
  Because of~\eqref{proof:2}, there exists $\mu\in(0,1)$ with $(-u,u+1,\mu)\in D_3$.
  Since $D_3$ is open, there exists $\varepsilon>0$ such that
  \[
    U := (-u-\varepsilon,-u+\varepsilon)
      \times(u+1-\varepsilon,u+1+\varepsilon)\times\{\mu\}\subseteq D_3.
  \]
  Because of
  \[
    \frac{p_0(n)}{p_2(n)}\tends{n\to\infty} {-u}\quad\text{and}\quad
    \frac{p_1(n)}{p_2(n)}\tends{n\to\infty} u+1,
  \]
  there exists $\xi\in\set N$ such that 
  \[
    \Bigl(\,\frac{p_0(n)}{p_2(n)},\frac{p_1(n)}{p_2(n)},\mu\Bigr)\in U\subseteq D_3
  \]
  for all $n\geq\xi$. 
  Together with~\eqref{proof:1}, this implies
  \begin{alignat*}1
   &
   \exists\ \mu\in(0,1)\
   \exists\ \xi\in\set N\ 
   \forall\ n\geq \xi\    
   \forall\ y_0,y_1\in\set R:\\
   &\quad\bigl(y_0\geq0\land y_1\geq\mu y_0\bigr)
   \Longrightarrow \frac{p_0(n)}{p_2(n)}y_0+\frac{p_0(n)}{p_2(n)}y_1\geq \mu y_1.
  \end{alignat*}
  Therefore, the set
  \[
    C:=\{\,(\xi,\mu)\in(0,\infty)\times(0,1):\Phi(\xi,\mu)\,\}
  \]
  with $\Phi(\xi,\mu)$ as used in Algorithm~\ref{algo:2} is not empty.

  Fix some point $(\xi,\mu)\in C$.
  Then it is immediate by the defining formula 
  that also $(\xi',\mu)\in C$ for every $\xi'>\xi$, so
  \[
    (\xi,\infty)\times\{\mu\}\subseteq C.
  \]
  Let now $f\colon\set N\to\field$ be the sequence to which Algorithm~\ref{algo:2} is
  applied.  Then, because the eigenvalues of its defining recurrence are $1$ and $u$ and
  we have $|u|<1$ and we assume generic initial values, we have
  \[
    \frac{f(n+1)}{f(n)}\tends{n\to\infty}1.
  \]
  Since $\mu<1$, this implies the existence of an index $m\in\set N$ such that $f(n)\geq0$
  for all $n\geq m$ and $f(n+1)/f(n)\geq\mu$ for all $n\geq m$, so that we get
  $f(n+1)\geq\mu f(n)$ for all $n\geq m$.

  It follows that the algorithm terminates no later than at iteration
  $\max(m,\xi)$.
\end{proof}

\begin{rem}
  The defining inequalities of $D_3$ in the preceding proof were found by
  quantifier elimination applied to formula~\eqref{proof:1} with the first
  quantifier dropped. This computation as well as the CAD computations 
  referred to in the proof were performed with Mathematica's built-in 
  implementation of CAD~\cite{strzebonski00,strzebonski06}. 
  The computation time is negligible for all of them and we are sure that
  other implementations~\cite[etc.]{dolzmann97,brown03} would have no problem with them either.
\end{rem}

\begin{example}
  The restriction to generic initial values in Theorem~\ref{thm:3} is
  essential: let $f\colon\set N\to\field$ be defined via
  \begin{alignat*}1
    &(n+3)^2f(n+2) -\tfrac12(n+2)(3n+11)f(n+1)\\
    &\qquad{}+\tfrac12(n+4)(n+1)f(n)=0
  \end{alignat*}
  and $f(0)=1, f(1)=1/4$. Then we have $f(n)=2^{-n}/(n+1)$ for all $n\in\set N$
  and so in particular $f(n)\geq0$ ($n\in\set N$). 

  Algorithm~\ref{algo:2} finds
  \[
    \Phi(\xi,\mu)\equiv 
    \tfrac12\leq\mu\leq 1\land
    \xi\geq\frac{17-12\mu-\sqrt{25-16\mu}}{2\mu-3}
  \]
  in Step~\ref{algo:3:4} and continues by searching for an index $n$
  with $f(n+1)\geq \tfrac12 f(n)$. 
  As no such index exist, the search continues forever.

  The general solution of the defining recurrence for $f$ is
  \[
    c_0+c_1\frac{2^{-n}}{n+1}
  \]
  and we will have $c_0\neq0$ for a generic choice of initial values.
  In these cases, the solution converges to $c_0$ and therefore
  eventually reaches an index $n_0$ with a term that is greater than
  half its predecessor.
\end{example}



\subsection{Order Three}

Consider now sequences $f\colon\set N\to\field$ defined by a balanced P-finite
recurrence of third order,
\[
  p_3(n)f(n+3)-p_2(n)f(n+2)-p_1(n)f(n+1)-p_0(n)f(n)=0.
\]
Again, we assume without loss of generality that $1$ is a dominant eigenvalue and we let $u,v\in\field$
be such that 
\[
  (x-1)(x^2+ux+v)=x^3-(1-u)x^2-(u-v)x-v
\]
is the characteristic polynomial of the recurrence under consideration. 
The condition that the two roots of the quadratic factor belong to the 
interior of the complex unit disc translates into the condition
\[
  |u|-1<v<1
\]
for the coefficients of the polynomial. The points $(u,v)\in\field^2$ satisfying
this condition form the interior of the triangle with corners at $(-2,1),(2,1),(0,-1)$:

\centerline{\epsfig{file=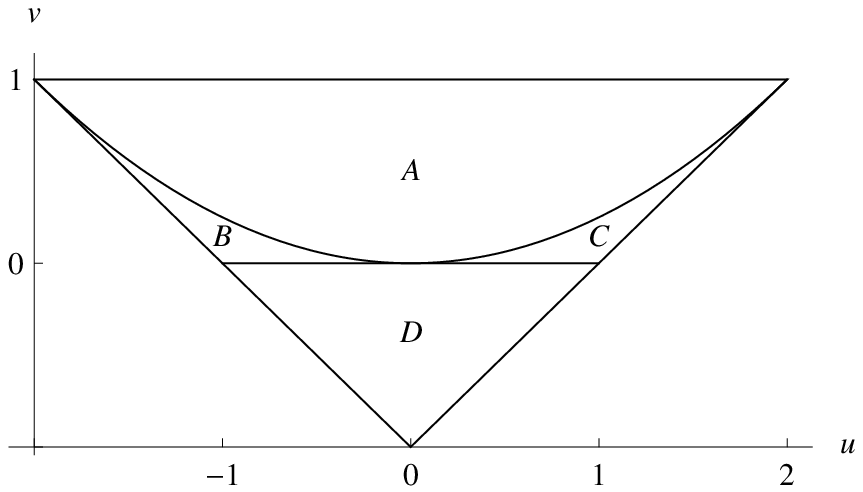,width=.8\hsize}}

Just for the sake of orientation: the polynomial $x^2+ux+v$ has two complex conjugate
roots in region~$A$, two positive real roots in region~$B$, two negative real roots
in region~$C$, and a positive as well as a negative root in region~$D$.

We want to identify regions of the triangle corresponding to recurrence
equations on which Algorithms~\ref{algo:1} and~\ref{algo:2} terminate. 
Only for Algorithm~\ref{algo:2} we have a satisfactory result, so let
us consider this case first.

\begin{theorem}\label{thm:5}
  If $|u|-1<v<1$ and $4v<(u+1)^2$ and $u<1$, then Algorithm~\ref{algo:2} terminates
  for generic initial values.
\end{theorem}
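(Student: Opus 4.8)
The plan is to follow the proof of Theorem~\ref{thm:3} step by step, replacing the order-two induction step by the order-three one and carrying out the analogous (and somewhat larger) CAD computations.

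First I would put the induction step inside $\Phi(\xi,\mu)$ (for $r=3$) into a convenient normal form. Abbreviate $c_0=p_0(x)/p_3(x)$, $c_1=p_1(x)/p_3(x)$, $c_2=p_2(x)/p_3(x)$, so that, taking the sign convention of the order-three recurrence into account, the conclusion of the induction step reads $c_0y_0+c_1y_1+c_2y_2\ge\mu y_2$. The substitution $z_0=y_0$, $z_1=y_1-\mu y_0$, $z_2=y_2-\mu y_1$ is a bijection carrying the hypothesis cone $\{y_0\ge0,\ y_1\ge\mu y_0,\ y_2\ge\mu y_1\}$ onto the nonnegative orthant, and it turns the conclusion into a linear form in $z_0,z_1,z_2$ whose coefficients are the Horner expressions $c_2-\mu$, $c_1+(c_2-\mu)\mu$, $c_0+c_1\mu+(c_2-\mu)\mu^2$. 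Hence the induction step holds if and only if these three coefficients are nonnegative; requiring them to be strictly positive (together with $0<\mu<1$ and, as in Theorem~\ref{thm:3}, a few harmless upper bounds on the $c_i$ to keep the region bounded) defines an open set $D_4\subseteq\set R^4$, the order-three analogue of the set $D_3$ used in the proof of Theorem~\ref{thm:3}, with the property that $(c_0,c_1,c_2,\mu)\in D_4$ implies the order-three induction step.

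Second I would use CAD to eliminate $\mu$ from the inequalities defining $D_4$, producing a semialgebraic set $D_3\subseteq\set R^3$ with
\[
  \forall\,(c_0,c_1,c_2)\in D_3\ \exists\,\mu\in(0,1):(c_0,c_1,c_2,\mu)\in D_4,
\]
and then verify --- again by CAD, or by the discriminant analysis sketched here --- that the map $(u,v)\mapsto(v,\,u-v,\,1-u)$ sends the hypothesis region $\{\,|u|-1<v<1,\ 4v<(u+1)^2,\ u<1\,\}$ into $D_3$. The roles of the three hypotheses are visible directly on the Horner coefficients: writing $b_1(\mu)=\mu-(1-u)$, $b_2(\mu)=\mu\,b_1(\mu)-(u-v)$ and $b_3(\mu)=\mu\,b_2(\mu)-v=(\mu-1)(\mu^2+u\mu+v)$, the three coefficients equal $-b_1(\mu),-b_2(\mu),-b_3(\mu)$. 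The hypothesis $u<1$ makes $\{b_1<0\}=(0,1-u)$ a nonempty interval; the hypothesis $4v<(u+1)^2$ is exactly positivity of the discriminant $(1+u)^2-4v$ of $b_2$, so that $\{b_2<0\}$ is a nonempty open interval $(r_-,r_+)$; and for $\mu\in(0,1)$ the condition $b_3(\mu)<0$ amounts to $\mu^2+u\mu+v>0$. One has to check that these three intervals admit a common point, i.e.\ that $(v,u-v,1-u)$ actually lands in $D_3$. I expect this confrontation of the CAD description of $D_3$ with the two-parameter hypothesis region to be the \textbf{main obstacle} --- not conceptually, but because the semialgebraic objects are noticeably bulkier than in the order-two case (the region can only be satisfied for $u>-1$, and the case split over the regions $A,B,C,D$ of the triangle must be handled uniformly).

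Finally, once $(v,u-v,1-u)\in D_3$ is established, the remainder is a verbatim adaptation of the proof of Theorem~\ref{thm:3}. Openness of $D_4$ yields $\varepsilon>0$ and $\mu\in(0,1)$ with $(v-\varepsilon,v+\varepsilon)\times(u-v-\varepsilon,u-v+\varepsilon)\times(1-u-\varepsilon,1-u+\varepsilon)\times\{\mu\}\subseteq D_4$; since the characteristic polynomial is $(x-1)(x^2+ux+v)$ we have $p_0(n)/p_3(n)\to v$, $p_1(n)/p_3(n)\to u-v$, $p_2(n)/p_3(n)\to 1-u$, so there is $\xi\in\set N$ with $(p_0(n)/p_3(n),p_1(n)/p_3(n),p_2(n)/p_3(n),\mu)\in D_4$ for all real $n\ge\xi$; together with soundness of $D_4$ this shows that $\Phi(\xi,\mu)$ holds, and by the shape of $\Phi$ also $\Phi(\xi',\mu)$ for every $\xi'>\xi$. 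For generic initial values, $1$ being the unique dominant eigenvalue (the quadratic factor has both roots inside the unit disc), we have $f(n)\to c$ for some $c\neq0$ and hence $f(n+1)/f(n)\to 1$; since $\mu<1$ there is $m\in\set N$ with $f(n)\ge0$, $f(n+1)\ge\mu f(n)$ and $f(n+2)\ge\mu f(n+1)$ for all $n\ge m$. At iteration $\max(m,\xi)$ the test in line~\ref{algo:3:8} of Algorithm~\ref{algo:2} then succeeds with this $\mu$, so the algorithm terminates.
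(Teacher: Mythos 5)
Your proposal is correct and takes essentially the same route as the paper's proof: your three Horner coefficients are exactly the defining inequalities of the paper's set $D_4$, and the subsequent steps (CAD elimination of $\mu$ to get a set $D_3\subseteq\set R^3$, CAD verification that $(v,u-v,1-u)\in D_3$ under the stated hypotheses, then the openness/convergence/genericity argument copied from Theorem~\ref{thm:3}) coincide with the paper's. The discriminant analysis explaining where the conditions $u<1$ and $4v<(u+1)^2$ come from is a nice supplement but does not alter the structure of the argument.
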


\begin{proof}
Consider the set $D_4\subseteq\set R^4$ consisting of all points $(c_0,c_1,c_2,\mu)$
satisfying 
\begin{alignat*}1
  &0<\mu <1\land \mu<c_2 \land \mu(\mu-c_2)<c_1
  \land \mu^3-c_2 \mu^2-c_1\mu<c_0
\end{alignat*}
and the set $D_3\subseteq\set R^3$ consisting of all points $(c_0,c_1,c_2)$ with
\begin{alignat*}1
  &\bigl(0<c_2<2\land -\tfrac14c_2^2<c_1<\min(3-2 c_2,c_2^2)\\
  &\qquad{}\land 2 c_2^3+9 c_1 c_2+27 c_0+2 (c_2^2+3 c_1)^{3/2}>0\bigr)\\
  &\quad \lor \bigl(0<c_2<1\land c_1\geq c_2^2\land c_0+c_1 c_2>0\bigr).
\end{alignat*}
The following facts can be verified by CAD:
\begin{alignat*}1
  \bullet\ &\forall\ (c_0,c_1,c_2)\in D_3\ 
   \exists\ \mu\in(0,1):(c_0,c_1,c_2,\mu)\in D_4\\
  \bullet\ &\forall\ (c_0,c_1,c_2,\mu)\in D_4\
   \forall\ y_0,y_1,y_2\in\set R:\notag\\
  &\quad\bigl(y_0\geq0\land y_1\geq\mu y_0\land y_2\geq\mu y_1\bigr)\notag\\
  &\qquad\Longrightarrow c_0y_0+c_1y_1+c_2y_2\geq\mu y_2\\
  \bullet\ &\forall\ u,v\in\set R:\bigl(|u|-1<v<1\land 4v<(u+1)^2 \land u<1\bigr)\notag\\
  &\qquad\Longrightarrow (v,u-v,1-u)\in D_3.
\end{alignat*}
Consequently, for $u,v$ from the statement of the theorem
there exists $\mu\in(0,1)$ such that $(v,u-v,1-u,\mu)\in D_4$.
Since $D_4$ is open, there exists $\varepsilon>0$ such that
\begin{alignat*}1
  U := {}&(v-\varepsilon,v+\varepsilon)
   \times (u-v-\varepsilon,u-v+\varepsilon)\\
   &\quad{}
   \times (1-u-\varepsilon,1-u+\varepsilon)
   \times\{\mu\}\subseteq D_4.
\end{alignat*}
Using
\[
  \frac{p_0(n)}{p_3(n)}\stackrel{n\to\infty}{\hbox to 2.5em{\rightarrowfill}} v, \quad
  \frac{p_1(n)}{p_3(n)}\stackrel{n\to\infty}{\hbox to 2.5em{\rightarrowfill}} u-v, \quad
  \frac{p_2(n)}{p_3(n)}\stackrel{n\to\infty}{\hbox to 2.5em{\rightarrowfill}} 1-u, 
\]
the rest of the proof is fully analogous to the proof of Theorem~\ref{thm:3}.
\end{proof}

The set of points $(u,v)$ for which Theorem~\ref{thm:5} asserts termination of
Algorithm~\ref{algo:2} is the shaded area in the figure below.

\centerline{\epsfig{file=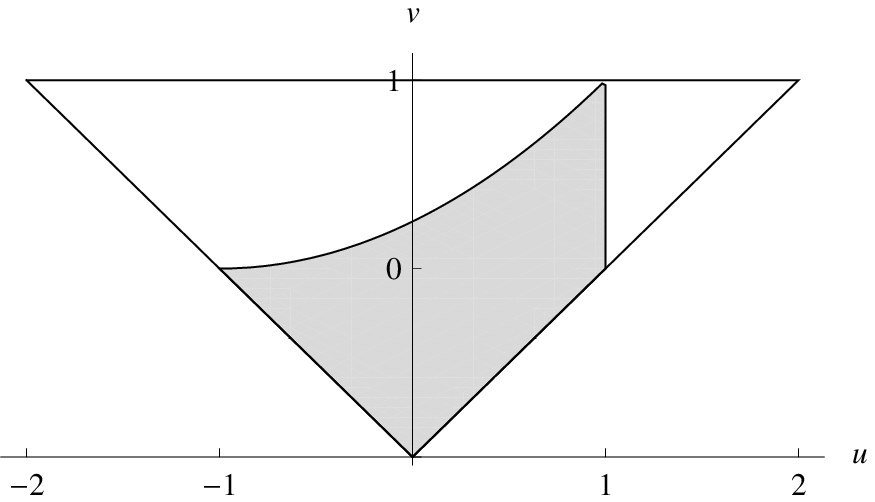,width=.8\hsize}}

The truth of the formula
\begin{alignat*}1
  &\forall\ u\in(-1,1)\ \forall\ v\in(|u|-1,1):\\
  &\quad\bigl[\exists\ \mu>0\ \forall\ y_0,y_1,y_2:\\
  &\qquad\bigl(y_0\geq0\land y_1\geq\mu y_0\land y_2\geq\mu y_1\bigr)\\
  &\qquad\Longrightarrow v y_0+(u-v)y_1+(1-u)y_2\geq\mu y_2\bigr]\\
  &\quad\Longrightarrow u < 1 \land 4v<(u+1)^2
\end{alignat*}
(as confirmed, once again, by a CAD computation) asserts that Theorem~\ref{thm:5} is sharp.

We are not able to provide a sharp result for the terminating region of Algorithm~\ref{algo:1}.
If we proceed to reason as in the proof of Theorem~\ref{thm:2}, we obtain termination for
$(u,v)$ restricted to the (open) triangle with vertices $(0,0)$, $(1,0)$,~$(1,1)$, essentially 
because of
\begin{alignat*}1
 &\forall\ u\in(0,1)\ \forall\ v\in(0,u)\
 \forall\ y_0,y_1,y_2:\\
 &\quad\bigl(y_0\geq0\land y_1\geq0\land y_2\geq0\bigr)\\
 &\quad\Longrightarrow v y_0+(u-v)y_1+(1-u)y_2\geq 0
\end{alignat*}
and the convergences
\[
  \frac{p_0(n)}{p_3(n)}\stackrel{n\to\infty}{\hbox to 2.5em{\rightarrowfill}} v, \quad
  \frac{p_1(n)}{p_3(n)}\stackrel{n\to\infty}{\hbox to 2.5em{\rightarrowfill}} u-v, \quad
  \frac{p_2(n)}{p_3(n)}\stackrel{n\to\infty}{\hbox to 2.5em{\rightarrowfill}} 1-u.
\]
But this is not the entire terminating region. 
A larger portion of the terminating region can be identified by starting out with a
formula corresponding to an induction hypothesis of length four. 
As the formula
\begin{alignat*}1
 &\forall\ y_0,y_1,y_2:\bigl(y_0\geq0\land y_1\geq0\land y_2\geq0\\
 &\qquad\quad\land v y_0+(u-v)y_1+(1-u)y_2\geq 0\bigr)\\
 &\ \Longrightarrow (1-u)v y_0 + u (1-u+v) y_1 + (1-u+u^2-v)y_2\geq0
\end{alignat*}
is true for all $(u,v)$ with
\begin{alignat*}1
  &u < 1\land v>0\land 1-u+u^2-v > 0\\
  &\quad\land\bigl(u>0\lor u^2-v-uv+v^2<0\bigr),
\end{alignat*}
and as we have
\begin{alignat*}1
  \frac{p_0(n)p_2(n+1)}{p_3(n)p_3(n+1)}&\tends{n\to\infty}(1-u)v,\\
  \frac{p_1(n)p_2(n+1)+p_0(n+1)p_3(n)}{p_3(n)p_3(n+1)}&\tends{n\to\infty}u(1-u+v),\\
  \frac{p_2(n)p_2(n+1)+p_1(n+1)p_3(n)}{p_3(n)p_3(n+1)}&\tends{n\to\infty}1-u+u^2-v,
\end{alignat*}
Algorithm~\ref{algo:1} also terminates for all $(u,v)$ satisfying the conditions
stated above. 

Starting out with a formula corresponding to a hypothesis of length five leads
to a portion of the termination region whose description can be computed in a reasonable
amount of time, but which is already too big to be reproduced here.
For longer induction hypotheses, the computational effort for doing quantifier 
elimination becomes prohibitive.
But it is still possible to determine experimentally the regions obtained by
taking a particular length~$\varrho$ of the induction hypothesis taken as the starting
point of the termination proof. 
The empiric results for induction hypotheses of length up to 10 are as follows
(the numbers indicate the length of the induction hypothesis):

\centerline{\epsfig{file=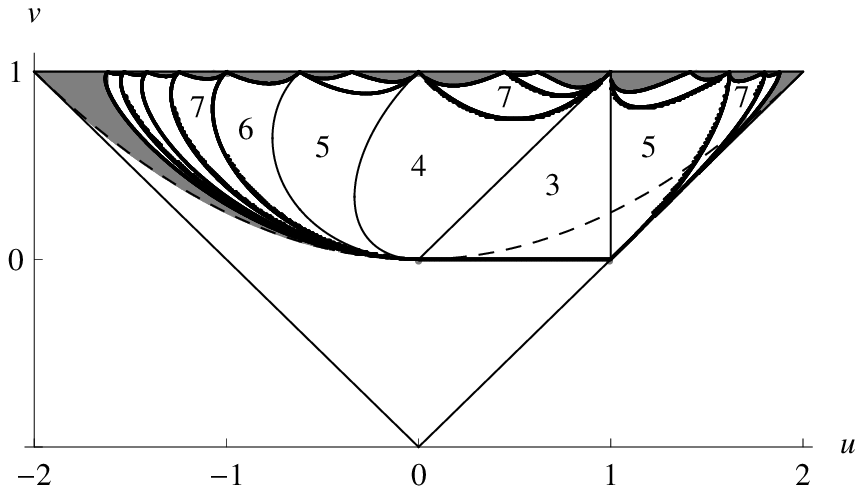,width=.8\hsize}}

The picture suggests the following characterization for the full region of termination.

\begin{conjecture}
  If $|u|-1<v<1$ and
  \[
    (u>1\land v>0)\lor 4v>(u+1)^2,
  \]
  then Algorithm~\ref{algo:1} terminates.
\end{conjecture}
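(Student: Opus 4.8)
The plan is to push the mechanism behind Theorems~\ref{thm:2} and~\ref{thm:5} to induction hypotheses of unbounded length and to analyse the resulting family of formulas at once.

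\medskip\noindent\emph{Reduction to the limiting recurrence.}
For $\varrho\ge 3$ the refined induction step formula $\Phi(\varrho)$ of Section~\ref{sec:gk} has rational coefficients $q_{i,j}(n)$; since the recurrence is balanced, each of them converges to a polynomial $\bar q_{i,j}(u,v)$ as $n\to\infty$, and replacing every $q_{i,j}$ by its limit yields a two-variable formula $\Psi_\varrho(u,v)$. Exactly as in Theorem~\ref{thm:2} --- take the limiting induction step of hypothesis length $\varrho$, which then holds for all $x\ge n_0$ provided the limit coefficients already make it hold strictly; rewrite $f(n+n_0+i)$ through $f(n),\dots,f(n+r-1)$ by the recurrence; and append the bridging hypotheses $f(n+i)\ge 0$ for $i<n_0$ --- one obtains that $\Phi(\varrho')$ is true for some $\varrho'\ge n_0$ whenever $(u,v)$ lies in the \emph{interior} of the truth set of $\Psi_\varrho$ and in the open eigenvalue triangle. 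It therefore suffices to cover the region of the conjecture by $\bigcup_\varrho T_\varrho$, where $T_\varrho$ is that interior intersected with the triangle. Here $\Psi_\varrho(u,v)$ says precisely that no real solution of the \emph{limiting} C-finite recurrence
\[
  g(n+3)=(1-u)g(n+2)+(u-v)g(n+1)+vg(n)
\]
has $\varrho$ consecutive nonnegative terms followed by a negative one; shifting the window gives $\Psi_\varrho\Rightarrow\Psi_{\varrho+1}$, so $\bigcup_\varrho T_\varrho$ is essentially the set of parameters for which the lengths of finite nonnegative runs of solutions of the limiting recurrence are bounded.

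\medskip\noindent\emph{Classifying the good parameters.}
Let $\rho_1,\rho_2$ be the roots of $x^2+ux+v$, so the solutions of the limiting recurrence are $c_0+a\rho_1^{\,n}+b\rho_2^{\,n}$ with $|\rho_1|,|\rho_2|<1$ and $\rho_1,\rho_2\ne 1$ on the open triangle. I claim the run lengths are unbounded if and only if $x^2+ux+v$ has a root in $(0,1)$. If $\rho\in(0,1)$ is a root then $g(n)=-1+\rho^{\,n+1}$ satisfies $g(n)\ge 0$ for all $n\le -1$ and $g(0)<0$, and its shifts $g(n-N-1)$ exhibit a nonnegative run of length $N+1$ followed by a negative term, for every $N$. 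Conversely, assuming unbounded runs, one looks at the \emph{end} of the runs: normalising the solutions on a fixed window, using that the companion matrix has bounded powers (its only eigenvalue of modulus $1$, namely $1$, is simple), and using $|\rho_i|<1$ to force convergence of a diagonal subsequence in the backward direction, one extracts a nonzero bi-infinite solution $h$ with $h(n)\ge 0$ for all $n<0$ and $h(0)\le 0$. As $n\to-\infty$ the subdominant part of $h$ dominates the constant, and it oscillates in sign whenever $\rho_1,\rho_2$ are non-real or negative; hence $h(n)\ge 0$ for $n<0$ forces that part to vanish unless $x^2+ux+v$ has a positive root, after which $h\equiv c_0$ together with $h(0)\le 0$ gives $h\equiv 0$, a contradiction. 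It remains to translate ``$x^2+ux+v$ has no root in $(0,1)$'' --- equivalently, its roots are non-real or both negative --- into an explicit condition on $(u,v)$ inside the triangle, and to match it with the region stated in the conjecture; this last step is a single CAD call.

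\medskip\noindent\emph{Main obstacle.}
The crux is the word \emph{interior} in the reduction. The run lengths are finite throughout the open region, but they blow up as $(u,v)$ approaches the locus where $x^2+ux+v$ has a (necessarily negative) double root inside the region: for nearby parameters with complex roots $\rho e^{\pm i\theta}$ the run length is of order $2\pi/\theta$. So $\Psi_\varrho$ holds there only non-robustly and the Theorem~\ref{thm:2}-style bridging does not transfer verbatim. Closing the gap on that stratum needs a quantitative argument showing that for an actual recurrence whose characteristic polynomial genuinely carries that double root the coefficients $q_{i,j}(n)$ do enter the relevant open semialgebraic set for all large $n$; failing that, the boundary of the region has to be treated with care. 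Two minor loose ends remain: the slice $v=0$, where the limiting recurrence drops to order two and one falls back on Theorems~\ref{thm:2}--\ref{thm:3}, and the case $h(0)=0$ in the compactness step, which is dealt with by a perturbation.
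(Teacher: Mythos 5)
First, be aware that the paper does not prove this statement: it is left as a conjecture, supported only by the explicit termination arguments for induction hypotheses of length three and four and by empirical pictures up to length ten, so your attempt cannot be measured against an existing proof. Your strategy --- pass to the limiting C-finite recurrence, show that its nonnegative runs have bounded length precisely when $x^2+ux+v$ has no root in $(0,1)$, and transfer back via the bridging device of Theorem~\ref{thm:2} --- is the natural formalization of what the paper's evidence suggests, and the ``unbounded runs when a positive root exists'' half is exactly the paper's own Remark. But there are genuine gaps beyond the one you flag. The central one is the transfer step: truth of an implication $\bigl(\bigwedge_i \ell_i(y)\ge0\bigr)\Rightarrow\ell(y)\ge0$ is \emph{not} an open condition on its coefficients (consider $y_0\ge0\Rightarrow cy_0\ge0$ at $c=0$), so knowing $\Psi_\varrho$ at the limit point, or even on a neighbourhood in $(u,v)$-space, does not yet yield $\Phi(\varrho)$ for all real $x\ge n_0$: you need the limiting tuple $(\bar q_{i,j})$ to lie in the \emph{interior} of the truth set in the full space of the $q_{i,j}$, equivalently a quantitative margin in the run-length bound that survives the perturbation $q_{i,j}(x)\to\bar q_{i,j}$. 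Theorems~\ref{thm:2} and~\ref{thm:5} secure exactly this by exhibiting explicit open semialgebraic sets ($D_3$, $D_4$) via one CAD computation each; you cannot run one CAD per $\varrho$ for unboundedly many $\varrho$, so a uniform analytic estimate is required and is missing. Relatedly, your ``main obstacle'' is mislocated: near a negative double root the complex roots have argument close to $\pi$ and the runs stay of length $O(1)$; the genuine degeneration $2\pi/\theta\to\infty$ occurs as the roots approach the positive real axis, i.e.\ on the boundary piece $4v=u^2$, $u<0$ of the target region --- consistent with the nested empirical pictures, but this does not remove the openness problem at interior points. The compactness extraction of the bi-infinite solution $h$ (nondegeneracy of the normalized limit, control of the backward blow-up) is also only sketched.

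Second, your concluding ``single CAD call'' would in fact expose a discrepancy you have not noticed: the region printed in the conjecture, $(u>1\land v>0)\lor 4v>(u+1)^2$, is \emph{not} the set where $x^2+ux+v$ has no positive root; inside the triangle that set is (up to its boundary) $(u>0\land v>0)\lor 4v>u^2$. For instance $(u,v)=(-1,1/10)$ satisfies the printed condition, yet $x^2-x+1/10$ has two roots in $(0,1)$; for the corresponding C-finite recurrence the solution $-1+\rho^{\,n-N}$ exhibits arbitrarily long nonnegative runs followed by a negative term, so by the paper's own Remark no $\Phi(\varrho)$ is true and Algorithm~\ref{algo:1} loops forever on the positive solution $f\equiv1$. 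Conversely $(u,v)=(1/2,1/4)$ lies in the paper's proven length-three termination triangle but not in the printed region. So the statement as printed is apparently a misprint for the ``no positive root'' region described in the surrounding text, which is what your argument actually targets; any write-up must resolve this mismatch explicitly rather than delegate it to a final CAD check.
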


The conjecture is equivalent to saying that Algorithm~\ref{algo:1} terminates
if $x^2+ux+v$ has no positive root. 
If the conjecture is true, then about 96.35\% of the area of the triangle 
are covered by one of the two algorithms we considered.


\end{document}